\DeclareMathOperator{\tr}{Tr}
\DeclareMathOperator{\dom}{domain}
\DeclareMathOperator{\img}{image}
\DeclareMathOperator{\Ex}{\mathbf{E}}
\DeclareMathOperator{\Prob}{P}
\DeclareMathOperator{\entropyOp}{H}
\DeclareMathOperator{\vecOp}{vec}
\DeclareMathOperator{\loccOp}{LOCC}
\DeclareMathOperator{\stabOp}{Stab}
\DeclareMathOperator{\mincutOp}{min-cut}
\DeclareMathOperator{\nullityOp}{\nu}
\DeclarePairedDelimiter\bra{\langle}{\vert}
\DeclarePairedDelimiter\ket{\vert}{\rangle}
\DeclarePairedDelimiterX\ketbra[2]{\vert}{\vert}{#1 {\delimsize\rangle\langle} #2}
\DeclarePairedDelimiterX\braket[2]{\langle}{\rangle}%
{#1\,\delimsize\vert\,\mathopen{}#2}
\DeclarePairedDelimiterX\brakett[3]{\langle}{\rangle}%
{#1\,\delimsize\vert\,\mathopen{}#2\,\delimsize\vert\,\mathopen{}#3}
\DeclarePairedDelimiter\abs{\lvert}{\rvert}
\DeclarePairedDelimiter{\paren}{(}{)}
\DeclarePairedDelimiterXPP\bigo[1]{O}{(}{)}{}{#1}
\DeclarePairedDelimiterXPP\littleo[1]{o}{(}{)}{}{#1}
\DeclarePairedDelimiterXPP\bigtildeo[1]{\tilde O}{(}{)}{}{#1}
\DeclarePairedDelimiterXPP\bigomega[1]{\Omega}{(}{)}{}{#1}
\DeclarePairedDelimiterXPP\bigtheta[1]{\Theta}{(}{)}{}{#1}
\DeclarePairedDelimiterXPP\diagonal[1]{\diag}{(}{)}{}{#1}
\DeclarePairedDelimiterXPP\domain[1]{\dom}{(}{)}{}{#1}
\DeclarePairedDelimiterXPP\image[1]{\img}{(}{)}{}{#1}
\DeclarePairedDelimiterXPP\trace[1]{\tr}{[}{]}{}{#1}
\DeclarePairedDelimiterXPP\ptrace[2]{\tr_{#1}}{[}{]}{}{#2}
\DeclarePairedDelimiterXPP\probability[1]{\Prob}{[}{]}{}{#1}
\DeclarePairedDelimiterXPP\expectation[1]{\Ex}{[}{]}{}{#1}
\DeclarePairedDelimiterXPP\entropy[1]{\entropyOp}{(}{)}{}{#1}
\DeclarePairedDelimiterXPP\vectorization[1]{\vecOp}{(}{)}{}{#1}
\DeclarePairedDelimiterXPP\locc[1]{\loccOp}{(}{)}{}{#1}
\DeclarePairedDelimiterXPP\mincut[2]{\mincutOp}{(}{)}{}{#1,#2}
\DeclarePairedDelimiterXPP\stab[1]{\stabOp}{(}{)}{}{#1}
\DeclarePairedDelimiterXPP\nullity[1]{\nullityOp}{(}{)}{}{#1}
\providecommand\given{}
\newcommand\SetSymbol[1][]{%
	\nonscript\:#1\vert
	\allowbreak
	\nonscript\:
	\mathopen{}}
\DeclarePairedDelimiterX\set[1]\{\}{%
	\renewcommand\given{\SetSymbol[\delimsize]}
	#1
}
\DeclareRobustCommand{\rvdots}{%
	\vbox{
	  	\baselineskip4\p@\lineskiplimit\z@
	  	\kern-\p@
	  	\hbox{.}\hbox{.}\hbox{.}
}}
\def\Ddots{\mathinner{\mkern1mu\raise\p@
\vbox{\kern7\p@\hbox{.}}\mkern2mu
\raise4\p@\hbox{.}\mkern2mu\raise7\p@\hbox{.}\mkern1mu}}
\newcommand*{\idm}{\ensuremath{\mathbbm{1}}}
\definecolor{dark-red}{rgb}{0.4,0.15,0.15}
\definecolor{dark-blue}{rgb}{0.15,0.15,0.4}
\definecolor{medium-blue}{rgb}{0,0,0.5}
\definecolor{mycomment}{rgb}{0.3,0.7,0.8}
\definecolor{mygray}{rgb}{0.5,0.5,0.5}
\definecolor{lightgray}{rgb}{0.95,0.95,0.95}
\definecolor{mymauve}{rgb}{0.58,0,0.82}
\newlist{ienumerate}{enumerate*}{1}
\setlist*[ienumerate,1]{%
    label=(\roman*),
}
\crefname{figure}{Figure}{Figures}
\crefname{equation}{}{} 
\Crefname{equation}{Equation}{Equations}
\newtheorem{theorem}{Theorem}[section]
\newtheorem{lemma}[theorem]{Lemma}
\newtheorem{corollary}[theorem]{Corollary}
\theoremstyle{definition}
\newtheorem{definition}[theorem]{Definition}
\theoremstyle{remark}
\newcommand{\cnot}[0]{{\textsc{cx}}}
\newcommand{\cz}[0]{{\textsc{cz}}}
\newcommand{\cknot}[1]{\ensuremath{\textsc{c}^{#1}\kern-0.1em\textsc{not}}}
\newcommand{\sw}[0]{{\textsc{swap}}} 
  \DeclareFontShape{T1}{lmr}{m}{scit}{<->ssub*lmr/m/scsl}{}%
\author{Vadym Kliuchnikov}
\affil{Microsoft Quantum}
\author{Eddie Schoute\thanks{\href{mailto:eschoute@lanl.gov}{eschoute@lanl.gov}}}
\affil{Computer, Computational, and Statistical Sciences Division, Los Alamos National Laboratory}
\title{Minimal entanglement for injecting diagonal gates}
\date{}
\begin{document}

\maketitle

\begin{abstract}
	\noindent Non-Clifford gates are frequently exclusively implemented on fault-tolerant architectures by first distilling magic states in specialised magic-state factories.
	In the rest of the architecture, the \emph{computational space},
	magic states can then be consumed by a stabilizer circuit to implement non-Clifford operations.
	We show that the connectivity between the computational space and magic state factories forms a fundamental bottleneck on the rate at which non-Clifford operations can be implemented.
	We show that the \emph{nullity} of the magic state, $\nullity{\ket{D}}$ for diagonal gate $D$,
	characterizes the non-local resources required to implement $D$ in the computational space.
	As part of our proof, we construct local stabilizer circuits that use only $\nullity{\ket{D}}$ ebits to implement $D$ in the computational space
	that may be useful to reduce the non-local resources required to inject non-Clifford gates.
	Another consequence is that the edge-disjoint path compilation algorithm [\emph{PRX Quantum} 3, 020342 (2022)]
	produces minimum-depth circuits for implementing single-qubit diagonal gates.
\end{abstract}

\section{Introduction}
Clifford operations, such as single-qubit Pauli rotations and \cnot{} (\textsc{cnot}) gates,
are relatively easy to perform fault-tolerantly but do not allow for universal quantum computing.
Adding a non-Clifford gate enables universal computation.
A common method for deterministically implementing non-Clifford diagonal gates is through gate injection~\cite{Zhou2000},
which implements the gate by a simpler (e.g., Clifford) circuit and uses a \emph{magic} resource state,
\begin{equation}
	\ket{D} \coloneqq D\ket{+}^{\otimes n},
\end{equation}
for the diagonal gate $D$ in the computational basis acting on $n$ qubits~\cite{Bravyi2005,Beverland2020}.
Producing high-fidelity magic states is difficult~\cite{Bravyi2012,Jones2013,Haah2017a,Campbell2018};
it takes a significant fraction of space and time on a fault-tolerant architecture to produce sufficient magic states.

Therefore, as part of the design process, certain regions are commonly dedicated to run highly-specialised \emph{magic state factories},
whose sole task is to produce high-quality magic states~\cite{JavadiAbhari2017,Litinski2019,Beverland2022a}
We assume that magic states are produced in some region(s) on the architecture
and the rest of the architecture is the \emph{computational space}.
Non-Clifford gates will for that reason require injecting magic states from spatially separated regions.
We bound the non-local resource requirements for implementing any diagonal gate
and give an efficient, resource-optimal protocol.

The magic states required for implementing a quantum algorithm depend on its decomposition.
Most algorithmic building blocks already naturally decompose into Clifford and diagonal gates~\cite{Beverland2022a},
but any non-Clifford operation can be decomposed~\cite{Iten2016,Malvetti2021,Kliuchnikov2023}.
Diagonal gate can be implemented using Clifford operations and $\textsc{t} = \sqrt[4]{Z}$ gates,
costing one $\ket{\textsc{t}}$ state each,
but this may require communicating more qubits from the magic state factories to the computational space than necessary.
For example, the \textsc{ccz} gate may be implemented using four $\ket{\textsc{t}}$ states,
but it can be implemented using only one $\ket{\textsc{ccz}}$ state, which is supported on three qubits~\cite{Beverland2020}.
In general, we can implement any diagonal gate, $D$, using a $\ket{D}$ state, Clifford operations, and use of the $DXD^\dagger$ gate,
which is one level lower in the Clifford hierarchy~\cite{Gottesman1999}.
However, the $\ket{D}$ state is not always the most space-efficient resource state to implement $D$.
For example, we can implement a common operator in Pauli-based computation~\cite{Litinski2019} $e^{i\frac{\pi}{8} P}$,
for a Pauli string $P$ in the Pauli group $\mathcal P_n$ on $n$ qubits
(such as $Z \otimes X \otimes Z \otimes \cdots \otimes Y$),
using only a single-qubit magic state, $\ket{\textsc{t}}$---a savings of $\bigo{n}$ space over $\ket{e^{i\frac{\pi}{8}P}}$.

We show that the \emph{stabilizer nullity}~\cite{Beverland2020} characterizes the resource requirements to implement $D$.
To define the nullity,
let us define the set of \emph{stabilizers} of a state as
\begin{equation}
	\stab{\ket \psi} \coloneqq \set{P \in \mathcal P_n \given P\ket\psi = \ket\psi},
\end{equation}
then the nullity of an $n$-qubit state $\ket\psi$ is
\begin{equation}
	\nullity{\ket\psi} \coloneqq n- \log_2\abs*{\stab{\ket{\psi}}}.
\end{equation}
The nullity is non-increasing under stabilizer (Clifford) circuits, $\mathcal C$,
i.e., $\nullity{\mathcal C\ket{\psi}} \le \nullity{\ket\psi}$,
and is additive under tensor factors such that $\nullity{\ket\psi \otimes \ket{\phi}} = \nullity{\ket\psi} + \nullity{\ket\phi}$,
for any states $\ket{\psi}$ and $\ket\phi$.

We show that implementing $D$ on the computational space 
requires a stabilizer circuit that can generate at least $\nullity{\ket{D}}$ ebits of entanglement
with the magic state factories.
Generating entanglement takes time~\cite{Bravyi2006,Marien2016,Delfosse2021}
and this leads to a lower bound on the depth of the circuit needed to inject $D$~\cite{Bapat2023}.

Conversely, we give a matching efficient algorithm that uses $\nullity{\ket{D}}$ ebits between the computational space and the rest of the system to implement $D$ using local Clifford operations.
A key component is the construction of Clifford unitaries that compress the support of $D$ to $\nullity{\ket{D}}$ qubits,
more compactly encoding its action on a subspace where the stabilizers are trivial.
A previously-shown example is the correspondence between $Z$ and any Pauli string $P$ by Clifford conjugation
that allows us to implement $e^{i\frac{\pi}{8}P}$ using only one $\ket{\textsc{t}}$ state,
i.e., there exists a Clifford unitary, $C$, such that $Ce^{i\frac{\pi}{8}P} C^\dagger = \textsc{t}$.
Our algorithm can therefore decrease the communication complexity between magic state factories and the computational space.

Finally, we apply our results to compilation algorithms for fault-tolerant architectures.
As a consequence of our lower bound,
the connectivity between the computational space and magic state factories limits the rate at which magic states can be injected.
It is therefore advisable to not put all magic state factories in one place as in~\cite{Beverland2022a}
if they are not well-connected to the rest of the architecture.
More precisely, the minimum-cut that separates the computational space from the magic state factories implies a lower bound on the circuit depth for implementing diagonal gates.
Additionally, we conclude that the edge-disjoint path compilation algorithm (EDPC)~\cite{Beverland2022} finds a depth-optimal circuit implementing single-qubit diagonal gates (such as \textsc{t}).

\section{Local simulation cost}
We define preliminaries, working up to the measure of interest, the \emph{local simulation cost},
which upper bounds the the number of ebits required to locally implement a post-selected stabilizer circuit.

We define \emph{Clifford unitaries} as the products of $\set{H, S, \cnot}$,
which are the first level of the Clifford hierarchy and are not universal for quantum computation.
By adding a \textsc{t} gate, we obtain the universal gate set Clifford+\textsc{t}.
In addition to Clifford unitary operations,
a fault-tolerant architecture commonly also supports the following \emph{stabilizer operations}:
\begin{definition}[Stabilizer operation]
	A stabilizer operation is one of the following operations:
	\begin{enumerate}
		\item Clifford unitaries, potentially classically controlled.
		\item State preparation of $\ket 0$.
		\item Destructive $Z$-basis measurements.
		\item Two-qubit joint Pauli measurements.
	\end{enumerate}
\end{definition}
Our results can be generalized to other sets of stabilizer operations,
only their local simulation cost (defined later) needs to be computed.
A \emph{stabilizer circuit} is a sequence of stabilizer operations.

We frequently work with post-selected stabilizer circuits, which map pure states to unnormalized pure states.
When post-selected, every stabilizer operation is a linear map.
Therefore, a post-selected stabilizer circuit, $\mathcal C$, is a linear map,
which maps pure states to pure states,
and we can write $\mathcal C\ket\psi = \ket\phi$,
where $\ket{\psi}$ and $\ket\phi$ are (unnormalized) pure states.

We are interested in the properties of circuits that inject (non-Clifford) diagonal operators.
Inherently, there is a bipartition that separates the computational space and the space where non-Clifford diagonal gates can be applied.
We formalize this distinction by the following definition.
\begin{definition}[$(\mathcal X, \mathcal Y)$-injection circuit]
	Let $\mathcal X$ and $\mathcal Y$ be Hilbert spaces,
	then an $(\mathcal X, \mathcal Y)$-injection circuit acts on the joint Hilbert space $\mathcal X \otimes \mathcal Y$
	by stabilizer operations and by non-Clifford gates only on $\mathcal X$.
\end{definition}
A \emph{local operation} acts non-trivially exclusively on $\mathcal X$ or on $\mathcal Y$.
Any other operation in an $(\mathcal X, \mathcal Y)$-injection circuit is \emph{non-local}.
An \emph{$(\mathcal X, \mathcal Y)$-stabilizer circuit} is a special case of an $(\mathcal X, \mathcal Y)$-injection circuit
that cannot apply non-Clifford gates.

A \emph{stabilizer state} is any state that can be constructed by a stabilizer circuit without input.
As we prove below,
non-local ebits are sufficient to construct any stabilizer state using local Clifford unitaries.

\begin{lemma}\label{lem:stabilizerStateDecomposition}
	Given a stabilizer state $\ket{S}$ prepared by a post-selected $(\mathcal X, \mathcal Y)$-stabilizer circuit,
	there is an efficient algorithm to find the unique $k \in \mathbb N$
	and find local Clifford unitaries $C_{\mathcal X}$ and $C_{\mathcal Y}$ acting on $\mathcal X$ and $\mathcal Y$,
	respectively, such that
	\begin{equation}
		\begin{quantikz}
			\lstick{$\bra{0}^{\otimes n_1}$} & \qwbundle{} & \gate[2]{C_{\mathcal X}} & \\
			\lstick[2]{$\bra{\Psi_+}^{\otimes k}$} & \qwbundle{} & & \\
			& \qwbundle{} & \gate[2]{C_{\mathcal Y}} & \\
			\lstick{$\bra{0}^{\otimes n_2}$} & \qwbundle{} & &
		\end{quantikz}
		= \ket{S},
	\end{equation}
	where the Bell states $\ket{\Psi_+} = \frac{\ket{00} + \ket{11}}{\sqrt 2}$ are entangling $\mathcal X$ and $\mathcal Y$.
\end{lemma}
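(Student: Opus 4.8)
The plan is to use the stabilizer-group formalism applied to the (unnormalized) pure state $\ket{S}$. First I would observe that, since $\ket{S}$ is produced by a post-selected stabilizer circuit, it is an (unnormalized) stabilizer state on $\mathcal X \otimes \mathcal Y$, hence there is a maximal abelian subgroup $G \le \mathcal P_{n}$ (with $n = n_{\mathcal X} + n_{\mathcal Y}$) of size $2^{n}$ that stabilizes it up to signs, and the algorithm maintaining a tableau of generators through the circuit produces $G$ efficiently. The structure I want to extract is the amount of entanglement across the $(\mathcal X, \mathcal Y)$ cut, which is exactly the entanglement entropy $k$ of $\ket{S}$; by standard results on stabilizer states this $k$ equals $n_{\mathcal X} - \log_2 |G \cap \mathcal P_{n_{\mathcal X}}|$ (the "local rank deficiency" on the $\mathcal X$ side), it is an integer, and it is uniquely determined by $\ket{S}$. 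So the uniqueness of $k$ and its efficient computability come for free from the tableau.

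Next I would construct the local Cliffords. The key is a normal-form / local-unitary classification of bipartite stabilizer states: every bipartite stabilizer state is local-Clifford equivalent to a tensor product of $k$ Bell pairs (one half in $\mathcal X$, one half in $\mathcal Y$) together with a product of $\ket{0}$'s and $\ket{+}$'s on the remaining qubits of each side. Concretely, I would restrict $G$ to its purely-$\mathcal X$ part $G_{\mathcal X} = G \cap \mathcal P_{n_{\mathcal X}}$ and purely-$\mathcal Y$ part $G_{\mathcal Y}$; these have ranks $n_{\mathcal X} - k$ and $n_{\mathcal Y} - k$ respectively. A single-qubit-by-single-qubit Clifford $C_{\mathcal X}$ can be chosen so that $C_{\mathcal X} G_{\mathcal X} C_{\mathcal X}^\dagger$ is generated by $\pm Z$ on the first $n_{\mathcal X} - k$ qubits (say), and similarly $C_{\mathcal Y}$; then the remaining $2k$ generators of $G$, which necessarily pair up one nontrivial factor across the cut with a matching one, can — after a further within-$\mathcal X$ and within-$\mathcal Y$ Clifford absorbed into $C_{\mathcal X}, C_{\mathcal Y}$ — be brought to the canonical Bell-pair stabilizers $X_i X_{i'}$, $Z_i Z_{i'}$ for $i$ in $\mathcal X$ and its partner $i'$ in $\mathcal Y$. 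Finally one uses that the signs can be fixed by local Paulis (themselves Clifford), so the $\ket{0}$'s can be taken with trivial sign; writing the identity with bras as in the statement is just the transpose/adjoint bookkeeping of "prepare $\ket{0}^{\otimes n_1} \otimes \ket{0}^{\otimes n_2} \otimes \ket{\Psi_+}^{\otimes k}$, then apply $C_{\mathcal X} \otimes C_{\mathcal Y}$."

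I would organize the steps as: (1) run the stabilizer tableau through $\mathcal C$ to obtain $G$ and handle the possibility that post-selection makes $\ket S$ vanish or that $\ket S$ is only supported on a proper subspace — here I'd note post-selection either yields $0$ (excluded, or handled trivially) or a genuine stabilizer state with a full-rank $G$; (2) compute $G_{\mathcal X}, G_{\mathcal Y}$ and define $k$ via the rank of $G_{\mathcal X}$, proving integrality and uniqueness by relating $k$ to the operator Schmidt rank of $\ket S$, which is basis-independent; (3) build $C_{\mathcal X}$ and $C_{\mathcal Y}$ by Gaussian elimination over $\mathbb F_2$ on the symplectic representation of the generators, using the standard fact that any set of commuting Pauli generators can be mapped to a standard form by a Clifford, and track that all operations stay local; (4) verify the resulting canonical state is exactly $\ket{\Psi_+}^{\otimes k} \otimes \ket{0}^{\otimes(n_1+n_2)}$ up to local Paulis, and fold those Paulis into $C_{\mathcal X}, C_{\mathcal Y}$.

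The main obstacle I expect is step (3): showing that the "cross" generators of $G$ — those acting nontrivially on both $\mathcal X$ and $\mathcal Y$ — can be simultaneously and independently brought into the Bell-pair form by Cliffords that are *separately* local to $\mathcal X$ and to $\mathcal Y$, without disturbing the already-normalized purely-local parts. This is the crux of the bipartite stabilizer normal form; I would handle it by first quotienting out $G_{\mathcal X}$ and $G_{\mathcal Y}$, working in the $2k$-dimensional symplectic space of cross-generators modulo local ones, and showing the induced form is a nondegenerate symplectic pairing between an $\mathcal X$-side space and a $\mathcal Y$-side space — which forces the Bell-pair structure. The efficiency claim then follows since every manipulation is polynomial-time $\mathbb F_2$ linear algebra on an $n \times 2n$ tableau.
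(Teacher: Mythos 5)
Your proposal is correct, but it takes a different route from the paper: the paper does not prove the decomposition itself, it simply invokes Lemma~II.24 of Haah (the reference~\cite{Haah2017} cited in the proof), and adds only the one-line observation that $k$ is unique because local unitaries cannot change the von Neumann entropy of $\ket{S}$ after tracing out one register. What you wrote is essentially a self-contained reconstruction of that cited result: tracking the stabilizer tableau through the post-selected circuit, defining $k$ via the rank deficiency of the purely-local subgroups $G_{\mathcal X}, G_{\mathcal Y}$, and normalizing the cross-cut generators to Bell-pair form by separately local symplectic Gaussian elimination, with signs fixed by local Paulis and $\ket{+}$ factors absorbed into $C_{\mathcal X}, C_{\mathcal Y}$ (needed, since the lemma's statement uses only $\ket{0}$ ancillas). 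Your identification of the crux --- showing the cross generators, modulo the local subgroups, form a nondegenerate pairing forcing the Bell structure --- is exactly the content of the bipartite stabilizer normal form underlying the cited lemma, and your uniqueness argument via Schmidt rank is equivalent to the paper's entropy argument. The trade-off: the paper's proof-by-citation is shorter and defers all algorithmic detail, while your route makes the efficient algorithm and the $\mathbb F_2$ linear-algebra cost explicit; to be fully rigorous you would still need to carry out step (3) in detail rather than sketch it, and you correctly flag the degenerate case where post-selection annihilates the state, which the paper leaves implicit.
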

\begin{proof}
	See~\cite[Lemma~II.24]{Haah2017}.
	The number $k$ is unique since local unitaries cannot change the von Neumann entropy of $\ket{S}$ when one register is traced out.
\end{proof}

\cref{lem:stabilizerStateDecomposition} also allows us to define the \emph{distillable entanglement} of a stabilizer state.
\begin{definition}[Stabilizer distillable entanglement]
	Given a stabilizer state $\ket{S}$ prepared by a post-selected $(\mathcal X, \mathcal Y)$-stabilizer circuit,
	then the distillable entanglement of $\ket{S}$ is the number of ebits, $k \in \mathbb N$,
	given by \cref{lem:stabilizerStateDecomposition}.
\end{definition}

By \cref{lem:stabilizerStateDecomposition},
the stabilizer distillable entanglement is efficiently computable.
Additionally, the stabilizer distillable entanglement is equal to the distillable entanglement~\cite[Ch.~6]{WatrousBook}
since local operations cannot increase the entanglement across the bipartition of the system.
Therefore, we will simply use the term \emph{distillable entanglement}.

We may now quantify how much a post-selected $(\mathcal X,\mathcal Y)$-stabilizer circuit, $\mathcal C$,
can entangle $\mathcal X$ and $\mathcal Y$.
We obtain the \emph{Choi state of $\mathcal C$}, $\ket{\Psi_{\mathcal C}}$, by applying $\mathcal C$ to an initial state
consisting of the maximally entangled states on $\mathcal X \otimes \mathcal X$ and $\mathcal Y \otimes \mathcal Y$.
Let the maximally entangled state on the Hilbert space $\mathcal X \otimes \mathcal X$ be
\begin{equation}
	\ket{\Psi_+^{\mathcal X}} \coloneqq \sum_{i=0}^{\dim(\mathcal X)-1} \ket{i} \otimes \ket{i},
\end{equation}
then we obtain the Choi state of $\mathcal C$
\begin{equation}
	\ket{\Psi_{\mathcal C}} \propto \paren{\idm_{\mathcal X} \otimes \mathcal C \otimes \idm_{\mathcal Y}}\paren{\ket{\Psi_+^{\mathcal X}} \ket{\Psi_+^{\mathcal Y}}}
\end{equation}
by normalizing the state.

\begin{figure}
	\centering
	\begin{equation*}
		\begin{quantikz}
			\lstick[2]{$\bra{\psi}$} & \gate[2]{\mathcal C}\wire[l][1]["\mathcal X"]{a} & \\
			& \wire[l][1]["\mathcal Y"]{a} &
		\end{quantikz}
		\propto
		\begin{quantikz}
			\lstick[6]{$\bra{\Psi_{\mathcal C}}$} & \wire[l][1]["\mathcal X"]{a} & & & &\\
			& \wire[l][1]["\mathcal X"]{a} &  & \rstick[2]{$\ket{\Psi_+^{\mathcal X}}$} \\
			&  \wireoverride{n} &  \lstick[2]{$\bra{\psi}$}\wireoverride{n} & \wire[l][1]["\mathcal X"]{a}  \\
			&  \wireoverride{n} & 						   \wireoverride{n} &\wire[l][1]["\mathcal Y"]{a} \rstick[2]{$\ket{\Psi_+^{\mathcal Y}}$} \\
			&  \wire[l][1]["\mathcal Y"]{a}& &  \\
			&  \wire[l][1]["\mathcal Y"]{a} & & & &
		\end{quantikz}
		=
		\begin{quantikz}
			\lstick{$\bra{0}^{\otimes n_1}$} & \gate[2]{C_{\mathcal X}}\qwbundle{} & \wire[l][1]["\mathcal X"]{a}  & \\
			\lstick[4]{$\bra{\Psi_+}^{\otimes k}$} & \qwbundle{} & \wire[l][1]["\mathcal X"]{a} \rstick[2]{$\ket{\Psi_+^{\mathcal X}}$} \\
			&  \lstick[2]{$\bra{\psi}$}\wireoverride{n}  & \wire[l][1]["\mathcal X"]{a} \\
			&  \wireoverride{n}  & \wire[l][1]["\mathcal Y"]{a} \rstick[2]{$\ket{\Psi_+^{\mathcal Y}}$} \\
			& \gate[2]{C_{\mathcal Y}}\qwbundle{} & \wire[l][1]["\mathcal Y"]{a}  \\
			\lstick{$\bra{0}^{\otimes n_2}$} & \qwbundle{} & \wire[l][1]["\mathcal Y"]{a} &
		\end{quantikz}
	\end{equation*}
	\caption{%
		Quantifying how much a post-selected $(\mathcal X, \mathcal Y)$-stabilizer circuit, $\mathcal C$,
		can entangle across the bipartition of $\mathcal X$ and $\mathcal Y$
		by the distillable entanglement of its Choi state, $\ket{\Psi_{\mathcal C}}$.
		This Choi state can be used to implement $\mathcal C$ on any $\ket{\psi} \in \mathcal X \otimes \mathcal Y$
		through local stabilizer operations and post-selection on
		$\ket{\Psi_+^{\mathcal X}} \coloneqq \frac{1}{\sqrt{\dim \mathcal X}} \sum_{i=1}^{\dim \mathcal X} \ket{i}\ket{i}$
		(and similarly for $\ket{\Psi_+^{\mathcal Y}}$),
		effectively implementing a qudit teleportation without corrections.
		By \cref{lem:stabilizerStateDecomposition}, $\ket{\Psi_{\mathcal C}}$ can be decomposed into local Clifford unitaries acting on $k$ Bell pairs and ancillas,
		quantifying how much entanglement is necessary to simulate $\mathcal C$ for any input state.
	}
	\label{fig:choiInjection}
\end{figure}
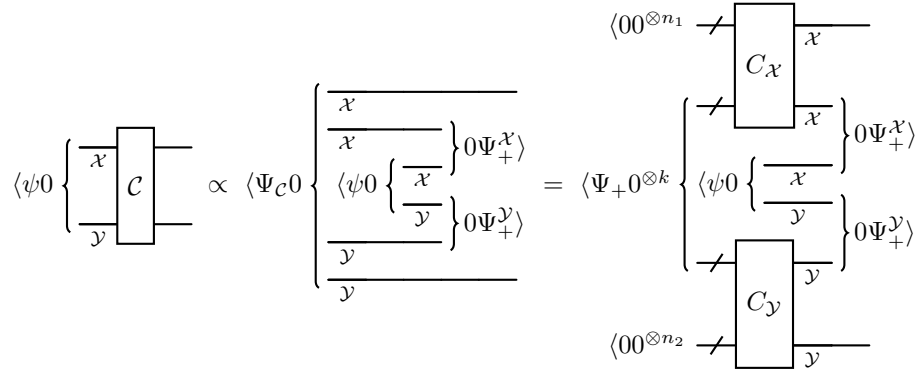

The distillable entanglement of $\ket{\Psi_{\mathcal C}}$ quantifies how much entanglement $\mathcal C$ can generate.
We showed a post-selected $(\mathcal X^{\otimes 2}, \mathcal Y^{\otimes 2})$-stabilizer circuit that uses
$\mathcal C$ and local operations to prepare $\ket{\Psi_{\mathcal C}}$.
Conversely, we show below that we can apply $\mathcal C$ to an arbitrary state
by acting on $\ket{\Psi_{\mathcal C}}$
using only local stabilizer operations in a post-selected $(\mathcal X^{\otimes 3}, \mathcal Y^{\otimes 3})$-stabilizer circuit.
Therefore, $\ket{\Psi_{\mathcal C}}$ is locally equivalent to implementing $\mathcal C$
and its distillable entanglement characterizes how many
non-local ebits are needed to implement $\mathcal C$ in the worst case (preparing the Choi state).
See \cref{fig:choiInjection} for a circuit diagram describing the procedure.

\begin{lemma}\label{lem:localSimulation}
	Given the Choi state, $\ket{\Psi_{\mathcal C}}$, of a post-selected $(\mathcal X, \mathcal Y)$-stabilizer circuit $\mathcal C$,
	there exists a post-selected $(\mathcal X^{\otimes 3}, \mathcal Y^{\otimes 3})$-stabilizer circuit $\Theta$
	consisting of local stabilizer operations
	such that
	\begin{equation}
		\Theta \ket{\psi}\ket{\Psi_{\mathcal C}} \propto \mathcal C \ket\psi,
	\end{equation}
	for any state $\ket{\psi} \in \mathcal X \otimes \mathcal Y$.
\end{lemma}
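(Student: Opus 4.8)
The plan is to show that $\ket{\Psi_{\mathcal C}}$ functions as a resource state for a gate-teleportation protocol that implements $\mathcal C$ without any corrections, because post-selection lets us discard all Pauli byproducts.

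First I would recall the standard Choi–Jamio\l kowski identity: for the maximally entangled state $\ket{\Psi_+^{\mathcal X}} = \sum_i \ket{i}\ket{i}$ on $\mathcal X \otimes \mathcal X$, and any state $\ket{\psi} \in \mathcal X \otimes \mathcal Y$, one has
\begin{equation}
	\bigl(\idm \otimes \bra{\Psi_+^{\mathcal X}} \otimes \idm\bigr)\bigl(\ket{\psi}_{\mathcal X \mathcal Y} \otimes \ket{\Psi_+^{\mathcal X}}_{\mathcal X \mathcal X}\bigr) \propto \ket{\psi},
\end{equation}
i.e.\ projecting two of the $\mathcal X$-copies onto $\bra{\Psi_+^{\mathcal X}}$ "teleports" the $\mathcal X$-part of $\ket\psi$ onto the third copy (the identity-channel case of the Choi correspondence), and likewise on $\mathcal Y$. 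This is exactly the middle-to-right reading of \cref{fig:choiInjection}. So the protocol $\Theta$ is: take the input $\ket\psi \in \mathcal X \otimes \mathcal Y$, bring in the prepared resource $\ket{\Psi_{\mathcal C}}$ on the other two pairs of copies, and perform a local joint measurement on $\mathcal X^{\otimes 2}$ (the input $\mathcal X$ together with the first $\mathcal X$-leg of $\ket{\Psi_{\mathcal C}}$) post-selected onto $\ket{\Psi_+^{\mathcal X}}$, and similarly a local joint measurement on $\mathcal Y^{\otimes 2}$ post-selected onto $\ket{\Psi_+^{\mathcal Y}}$. By construction $\ket{\Psi_{\mathcal C}} \propto (\idm_{\mathcal X} \otimes \mathcal C \otimes \idm_{\mathcal Y})(\ket{\Psi_+^{\mathcal X}}\ket{\Psi_+^{\mathcal Y}})$, so after the two post-selected projections the remaining $\mathcal X^{\otimes 1} \otimes \mathcal Y^{\otimes 1}$ register carries $\mathcal C\ket\psi$ up to normalization. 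I would write this out as a short chain of equalities, tracking which tensor factors the projectors act on; it is the algebraic content of the right-hand equality in \cref{fig:choiInjection} combined with \cref{lem:stabilizerStateDecomposition} only insofar as the latter guarantees $\ket{\Psi_{\mathcal C}}$ exists and is a genuine (normalizable) stabilizer state.

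The one real thing to check is that the projection onto $\ket{\Psi_+^{\mathcal X}}$ (and onto $\ket{\Psi_+^{\mathcal Y}}$) can actually be realized by a \emph{local, post-selected stabilizer circuit} acting only within $\mathcal X^{\otimes 3}$ (resp.\ $\mathcal Y^{\otimes 3}$), since that is the allowed class of operations and since $\mathcal X, \mathcal Y$ may be many-qubit spaces rather than single qubits. For a single pair of qubits, projecting onto the Bell state $\ket{\Psi_+}$ is the post-selected outcome of measuring the two commuting Pauli observables $XX$ and $ZZ$, which is a two-qubit joint Pauli measurement and hence a stabilizer operation; for $\dim\mathcal X = 2^m$ one simply does this qubit-wise across the $m$ matched qubit pairs, each measurement being local to $\mathcal X^{\otimes 2} \subseteq \mathcal X^{\otimes 3}$. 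So $\Theta$ is a sequence of two-qubit joint Pauli measurements (all local), post-selected on the all-$\ket{\Psi_+}$ outcome, and no classically controlled Cliffords are even needed — this is what "teleportation without corrections" in the figure caption refers to. The main obstacle, such as it is, is bookkeeping: being careful that the three copies of $\mathcal X$ and the three copies of $\mathcal Y$ are tracked consistently so that the surviving register after post-selection is precisely the one on which $\mathcal C\ket\psi$ lives, and that normalization factors (which are nonzero because the post-selected amplitude $\braket{\Psi_+^{\mathcal X}}{\cdot}$ is nonzero for the relevant branch) are absorbed into the $\propto$. Once the identity-channel teleportation identity and the stabilizer-realizability of the Bell projection are in hand, the lemma follows immediately.
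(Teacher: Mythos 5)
Your proposal is correct and matches the paper's proof, which simply points to the circuit in \cref{fig:choiInjection}: post-selected Bell projections (realizable qubit-wise as local joint Pauli $XX$ and $ZZ$ measurements) on the input together with the corresponding legs of $\ket{\Psi_{\mathcal C}}$, i.e.\ teleportation without corrections. You have merely written out explicitly the algebra the paper leaves to the figure, so there is nothing to add.
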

\begin{proof}
	We specify $\Theta$ in \cref{fig:choiInjection}.
\end{proof}

For our purposes it is sufficient to only consider post-selected stabilizer circuits,
but \cref{lem:localSimulation} can be extended to not require post-selection in some cases.
For example, any stabilizer circuit, $\mathcal C$, with only Pauli unitaries controlled by parities of measurement outcomes
also has an equivalent implementation that uses a state $\ket{\Psi_{\mathcal C}}$ followed by Bell measurements and conditional Pauli unitaries.
The appropriate Pauli correction and the map between measurement outcomes
can be recovered using a stabilizer circuit verification algorithm~\cite{Kliuchnikov2023a}.

We now define a measure called the \emph{local simulation cost} that upper bounds the Bell pair resources required to locally implement a post-selected stabilizer circuit.

\begin{definition}[Local simulation cost]\label{def:localSimulationCost}
	Consider the post-selected stabilizer operation (with fixed classical control), $S$,
	acting on $\mathcal X \otimes \mathcal Y$.
	We define the local simulation cost of $S$ to be the distillable entanglement of
	its Choi state, 
	$\ket{\Psi_S}$,
	where we interpret $S$ as a post-selected $(\mathcal X, \mathcal Y)$-stabilizer circuit.

	The local simulation cost of a post-selected $(\mathcal X, \mathcal Y)$-injection circuit is the sum of the local simulation costs of all its stabilizer operations.
\end{definition}

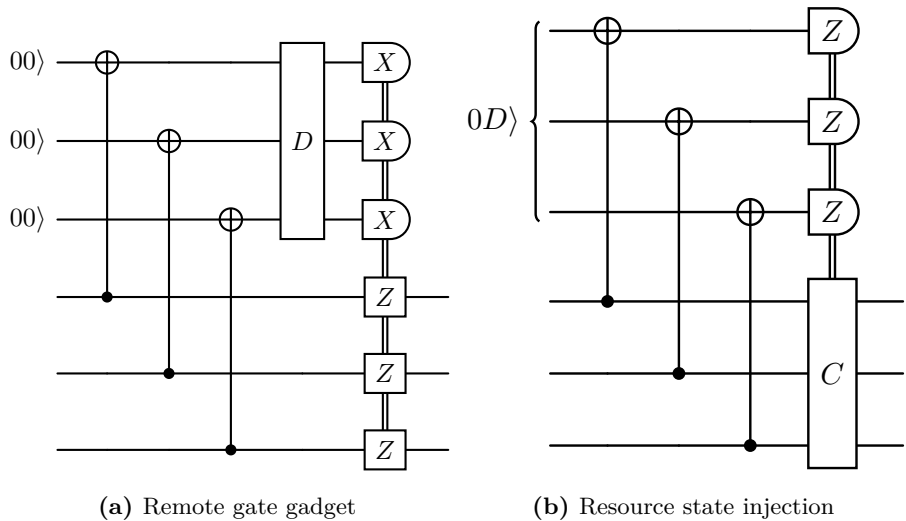
\begin{figure}
	\centering
	\begin{subfigure}{0.5\textwidth}
		\centering
		\begin{adjustbox}{width=\textwidth}
			\begin{quantikz}
				\lstick{$\ket{0}$} & \targ{} &&& \gate[3]{D} &\meterD{X} \wire[d][5]{c} \\
				\lstick{$\ket{0}$}		& &\targ{} && &\meterD{X} \\
				\lstick{$\ket{0}$} & &&\targ{} & &\meterD{X} \\
				& \ctrl{-3} &&&& \gate{Z} & \\
				&& \ctrl{-3} &&& \gate{Z} & \\
				&&& \ctrl{-3} && \gate{Z} &
			\end{quantikz}
		\end{adjustbox}
		\caption{%
			Remote gate gadget
		}
		\label{fig:remoteGate}
	\end{subfigure}%
	\hfill
	\begin{subfigure}{0.5\textwidth}
		\centering
		\begin{adjustbox}{width=\textwidth}
			\begin{quantikz}
				\lstick[3]{$\ket{D}$} & \targ{} &&& \meterD{Z} \wire[d][3]{c} \\
				& &\targ{} && \meterD{Z} \\
				& &&\targ{} & \meterD{Z} \\
				& \ctrl{-3} &&& \gate[3]{C} & \\
				&& \ctrl{-3} && &\\
				&&& \ctrl{-3} & &
			\end{quantikz}
		\end{adjustbox}
		\caption{%
			Resource state injection
		}
		\label{fig:resourceInjection}
	\end{subfigure}
	\caption{%
		Two ways of applying a diagonal gate $D$ to an arbitrary state.
		Both generalize in the obvious way to arbitrarily-sized support of $D$.
		(\subref{fig:remoteGate}) The remote gate gadget~\cite{Beverland2022} can be used to defer execution of $D$,
		since only a Clifford correction remains on the system register.
		This can also allow execution of $D$ in a different region of the architecture where $D$ can be performed.
		(\subref{fig:resourceInjection}) Applying $D$ of finite Clifford hierarchy level $k$ using a resource state $\ket{D} = D\ket{+}^{\otimes 3}$
		and corrections, $C$, of Clifford hierarchy level $k-1$~\cite{Beverland2020}.
		In both cases, if the measurement outcome is all $0$, then no correction is required.
		When the ancilla space is far from the input on the architecture,
		we can use a shared ebit to implement each non-local \cnot{} using LOCC instead~\cite{Beverland2022}.
	}
\end{figure}

The local simulation cost of a post-selected $(\mathcal X, \mathcal Y)$-injection circuit, $\mathcal C$,
upper bounds the ebits required to implement $\mathcal C$ using local operations.
There are only two types of non-local operations, with a non-zero local simulation cost of 1  (see \cref{app:localSimulationCost}):
\cnot{}s and two-qubit joint Pauli measurements~\cite[Figs.\ 17 and 19]{Haner2022}.

\section{Nullity characterizes local simulation cost}
In this section, we prove that the local simulation cost of
a diagonal gate $D$ is exactly equal to $\nullity{\ket{D}}$.
We prove this result by first showing the necessary direction
and then showing that it is sufficient to have $\nullity{\ket{D}}$ ebits to implement $D$.

\begin{lemma}\label{lem:necessaryDiagonal}
	For an $(\mathcal X, \mathcal Y)$-injection circuit, $\mathcal C$,
	post-selected on any measurement outcome $m=m_1\dots m_\ell$,
	it is necessary to have local simulation cost of at least $\nullity{\ket{D}}$
	to implement a diagonal operation $D$ in $\mathcal Y$.
\end{lemma}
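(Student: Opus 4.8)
The plan is to propagate through the circuit a potential $q$ that (i) vanishes on the input, (ii) is at least $\nullity{\ket D}$ on the output of any circuit correctly implementing $D$, and (iii) grows by at most the local simulation cost at each operation. Two features of $q$ are essential. First, it cannot be an entanglement measure across the $\mathcal X$--$\mathcal Y$ cut: such a quantity need not be monotone, since the circuit may create entanglement across the cut and then uncompute it before the final post-selection. Second, it cannot be the plain stabilizer nullity, because the non-Clifford gates on $\mathcal X$ can raise the nullity by an unbounded amount. The resolution is to restrict attention to stabilizers supported away from $\mathcal X$. Adjoin an inert reference register $\mathcal R$ of $n$ qubits---where $D$ acts on an $n$-qubit subregister $\mathcal Y_D$ of $\mathcal Y$---on which $\mathcal C$ never acts; let $\mathcal Z$ be the tensor factor of all qubits other than those of $\mathcal X$; and for a pure state $\ket\Psi$ on $\mathcal X\otimes\mathcal Z$ set
\[
	q(\ket\Psi) \coloneqq \log_2\dim(\mathcal Z) - \log_2\abs*{\stab{\ket\Psi}\cap(\idm_{\mathcal X}\otimes\mathcal P_{\mathcal Z})},
\]
the number of qubits of $\mathcal Z$ not pinned down by stabilizers supported entirely inside $\mathcal Z$. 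One checks at once that $q\ge 0$, that $q$ is additive over tensor factors respecting the partition of qubits into $\mathcal X$ and $\mathcal Z$, and that $q$ is invariant under Clifford unitaries that act within $\mathcal X$ or within $\mathcal Z$.

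The heart of the argument is the per-operation bound: for every operation of $\mathcal C$ post-selected on $m$ (so that all classical controls, and hence all operations, are fixed), $q$ increases by at most that operation's local simulation cost. The observation that makes this work---and that is the real content of the lemma---is that a non-Clifford unitary on $\mathcal X$ leaves $q$ \emph{exactly} invariant: if $P$ is supported on $\mathcal Z$ then $\idm_{\mathcal X}\otimes P$ commutes with any unitary on $\mathcal X$, so $\stab{\ket\Psi}\cap(\idm_{\mathcal X}\otimes\mathcal P_{\mathcal Z})$ is unchanged. Magic created on $\mathcal X$ therefore cannot reach $\mathcal Z$ except via stabilizer operations straddling the cut. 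For the local stabilizer operations ($\ket 0$ preparation, destructive $Z$-measurement, joint Pauli measurements within $\mathcal X$ or within $\mathcal Z$, local Cliffords) a routine stabilizer-group update---using that projecting onto a post-selected outcome can only enlarge a stabilizer group---shows $q$ is non-increasing. For the only nonlocal stabilizer operations of positive cost, a nonlocal \cnot{} and a nonlocal two-qubit joint Pauli measurement, an index-$2$ argument on the $\mathcal Z$-supported stabilizers shows $\log_2\abs{\stab{\ket\Psi}\cap(\idm_{\mathcal X}\otimes\mathcal P_{\mathcal Z})}$ drops by at most $1$, so $q$ rises by at most $1$, matching their cost. (Since a general Clifford unitary is a product of $H$, $S$, and \cnot{}, it suffices to treat these elementary gates; alternatively, one can first use \cref{lem:localSimulation} to replace the whole circuit by local operations acting on a stabilizer resource state of distillable entanglement $c$, and then only verify that local operations do not increase $q$.)

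It remains to evaluate $q$ at the two ends. Run $\mathcal C$ post-selected on $m$ with $\mathcal Y_D$ and $\mathcal R$ initialized in the maximally entangled state $\ket{\Psi_+^{\mathcal Y_D}}$ (equivalently $n$ Bell pairs, matching qubits of $\mathcal Y_D$ to qubits of $\mathcal R$) and all other qubits of $\mathcal X$ and $\mathcal Y$ in $\ket 0$. The input is a stabilizer state, product across the $\mathcal X$--$\mathcal Z$ cut with a pure stabilizer state on $\mathcal Z$, so $q=0$ there. Because the circuit implements $D$ on $\mathcal Y_D$ for \emph{every} input (this being what ``implements $D$'' means), by linearity the output equals $\ket\xi\otimes(D\otimes\idm)\ket{\Psi_+^{\mathcal Y_D}}$, where $\ket\xi$ is a fixed state on $\mathcal X$ together with the remaining qubits of $\mathcal Y$, and $D$ acts on $\mathcal Y_D$, $\idm$ on $\mathcal R$. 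Conjugating the $2n$ generators $X_iX_i'$, $Z_iZ_i'$ of $\stab{\ket{\Psi_+^{\mathcal Y_D}}}$ (where $i'$ is the $\mathcal R$-qubit matched to $\mathcal Y_D$-qubit $i$) by $D\otimes\idm$ leaves the $n$ generators $Z_iZ_i'$ fixed, while a product of $X$-type generators indexed by $a$ maps to the Pauli operator $(DX^aD^\dagger)\otimes X^a$ precisely when $DX^aD^\dagger$ is Pauli, and $\abs{\set{a\given DX^aD^\dagger\in\mathcal P_n}}=\abs{\stab{\ket D}}$ (the map $a\mapsto DX^aD^\dagger$ being an injection onto $\stab{\ket D}$). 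Hence $\log_2\abs{\stab{(D\otimes\idm)\ket{\Psi_+^{\mathcal Y_D}}}}=n+\log_2\abs{\stab{\ket D}}$, so $\nullity{(D\otimes\idm)\ket{\Psi_+^{\mathcal Y_D}}}=2n-(n+\log_2\abs{\stab{\ket D}})=\nullity{\ket D}$; since $\mathcal Y_D$ and $\mathcal R$ both lie in $\mathcal Z$, additivity and $q\ge 0$ give $q(\text{output})\ge\nullity{\ket D}$. Summing the per-operation bound over all operations of $\mathcal C$ then yields $\nullity{\ket D}\le q(\text{output})-q(\text{input})\le c$, the local simulation cost.

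The step I expect to be the main obstacle is the per-operation bookkeeping: getting the stabilizer-group updates right in all cases, in particular for nonlocal joint Pauli measurements and for destructive $Z$-measurements that discard a qubit of $\mathcal Z$, and checking carefully that post-selecting a (generally non-stabilizer) state can only help, never hurt, the count $\log_2\abs{\stab{\ket\Psi}\cap(\idm_{\mathcal X}\otimes\mathcal P_{\mathcal Z})}$. The conceptual crux---recognizing that the right potential is this $\mathcal Z$-localized nullity rather than an entanglement measure, and that it is untouched by non-Clifford gates on $\mathcal X$---and the identity $\nullity{(D\otimes\idm)\ket{\Psi_+^{\mathcal Y_D}}}=\nullity{\ket D}$ are the other points needing care.
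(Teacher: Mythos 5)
Your proposal is correct, but it takes a genuinely different route from the paper's. The paper replaces every non-Clifford diagonal gate by a post-selected remote-gate gadget (\cref{fig:remoteGate}), commutes them all to the end, applies \cref{lem:stabilizerStateDecomposition} to the intermediate stabilizer state $(C_1\otimes C_2)\ket{0}^{\otimes n_1}\ket{\Psi_+}^{\otimes k}\ket{0}^{\otimes n_2}$, and gets $\nullity{\ket{D}}\le k$ from Clifford-invariance and additivity of the nullity, with $k$ bounded by the local simulation cost. You instead run a step-by-step monotone argument with a ``$\mathcal Y$-side nullity'' $q$ (stabilizers supported away from $\mathcal X$), exactly invariant under arbitrary unitaries on $\mathcal X$, non-increasing under local stabilizer operations, and increasing by at most $1$ per cross-cut \cnot{} or joint Pauli measurement, then evaluate $q$ at the endpoints. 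Your route needs no deferral, so it does not rely on the non-Clifford gates on $\mathcal X$ being diagonal (the paper's gadget step does, although the definition of an injection circuit allows arbitrary non-Clifford gates on $\mathcal X$); it gives per-operation accounting of the kind used again in \cref{thm:rotationsLB}; and your fallback via \cref{lem:localSimulation} (adjoin each operation's Choi resource, raising $q$ by exactly its cost, then apply only local operations) makes it robust to enlarging the stabilizer operation set, e.g.\ to monolithic multi-qubit Cliffords. The verifications you defer do go through; the only delicate one is the destructive $Z$-measurement on a qubit of $\mathcal Z$, where abelianness of the stabilizer group is needed to rule out losing two $\mathcal Z$-supported generators while $\mathcal Z$ shrinks by one qubit. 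Your identity $\nullity{(D\otimes\idm)\ket{\Psi_+^{\mathcal Y_D}}}=\nullity{\ket{D}}$ is correct, though the reference register is dispensable: running on $\ket{+}^{\otimes n}$, as the paper does, already produces the tensor factor $\ket{D}$ inside $\mathcal Z$, and additivity of $q$ together with $q\ge 0$ finishes the endpoint bound.
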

\begin{proof}
	Since the measurement outcomes are fixed to $m$, any controlled operations are fixed.
	The circuit works for any input state,
	so we may assume the input state is $\ket{+}^{\otimes \ell} \in \mathcal Y$
	and maps the input to $D\ket{+}^{\otimes \ell} = \ket{D}$.

	\begin{figure}
		\begin{subfigure}{0.5\textwidth}
			\centering
			\begin{adjustbox}{width=\textwidth}
				\begin{quantikz}
					& \lstick{$\bra{0}^{\otimes k_1}$}\wireoverride{n}              & \targ{} \qwbundle{}                & \gate{D_1} & \rstick{$\ket{+}^{\otimes k_1}$} \\
					\lstick[2]{$\mathcal X$} & \gate[3]{\mathcal C_1}\wireoverride{n}       & \ctrl{-1}                       & \gate[3]{\mathcal C_2} \\
					& \wireoverride{n}               & \qwbundle{}                      &               \\
					\lstick{$\bra{+}^{\otimes \ell}$}\wireoverride{n}  & \qwbundle{}                    &             &                                     &                                         \\
				\end{quantikz}
			\end{adjustbox}
			\caption{Deferring $D_1$ execution}
			\label{fig:deferredStatePreparation}
		\end{subfigure}
		\begin{subfigure}{0.5\textwidth}
			\centering
			\begin{adjustbox}{width=\textwidth}
				\begin{quantikz}
					\lstick{$\bra{0}_a^{\otimes n_1}$} & \gate[2]{C_1}\qwbundle{} & \gate[2]{\bigotimes_i D_i} & \rstick{$\ket{+}^{\otimes n_1}$} \\
					\lstick[2]{$\bra{\Psi^+}_{bc}^{\otimes k}$} &  \qwbundle{} & & \rstick{$\ket{+}^{\otimes k}$} \\
					&  \gate[2]{C_2}\qwbundle{} & &   \\
					\lstick{$\bra{0}_d^{\otimes n_2}$} & \qwbundle{} & &
				\end{quantikz}
			\end{adjustbox}
			\caption{Injecting $D$ using entanglement}
			\label{fig:localStatePreparation}
		\end{subfigure}
		\caption{%
			We can transform an $(\mathcal X, \mathcal Y)$-injection circuit that implements a diagonal operation $D$ on the $\mathcal Y$ space.
			We may apply this circuit on $\ket{+}^{\otimes \ell}$ so that the resulting state is $\ket{D}$.
			Any such circuit can be broken apart into a $(\mathcal X, \mathcal Y)$-injection circuit $\mathcal C_1$, followed by a (non-Clifford) diagonal gate $D_1$ acting on $k_1$ qubits, then an $(\mathcal X, \mathcal Y)$-injection circuit $\mathcal C_2$.
			(\subref{fig:deferredStatePreparation}) Using the remote gate gadget (\cref{fig:remoteGate})
			where we post-select on $\ket{+}^{\otimes k_1}$,
			we defer the execution of $D_1$.
			By iterating this process for $\mathcal C_1$ and $\mathcal C_2$,
			we may defer execution of all non-Clifford gates until the end of the circuit.
			(\subref{fig:localStatePreparation}) By \cref{lem:stabilizerStateDecomposition},
			we now may re-express the stabilizer state before non-Clifford gates as
			$(C_1 \otimes C_2)\ket{0}^{\otimes n_1}\ket{\Psi^+}^{\otimes k}\ket{0}^{\otimes n_2}$,
			where $C_1$ and $C_2$ are Clifford unitaries that only act locally.
		}
	\end{figure}
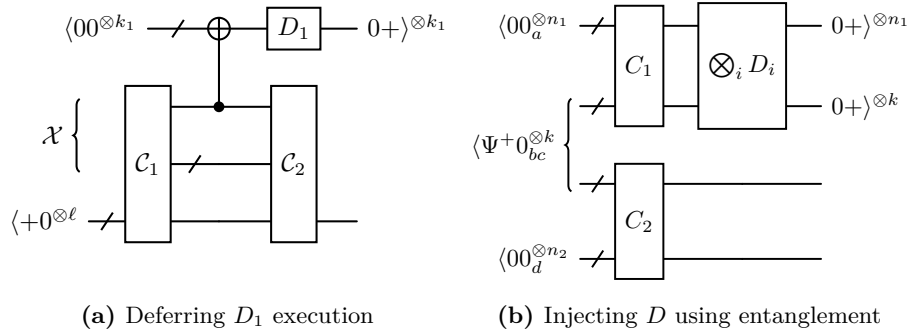

	We will transform to an equivalent circuit through only local operations.
	Therefore, the transformation cannot increase the local simulation cost of the circuit.
	We replace all instances of non-Clifford diagonal operations
	by remote gate gadgets (\cref{fig:deferredStatePreparation}).
	Since we are operating in a post-selected setting,
	we may post-select the correction outcomes to be such that no correction is necessary.
	Now we note that all non-Clifford diagonal operations commute with the rest of the circuit,
	and may therefore be performed at the end.
	Everything before these operations (including on $\mathcal Y$) is a post-selected $(\mathcal X, \mathcal Y)$-stabilizer circuit,
	and can therefore be seen as preparing a stabilizer state $\ket S$.

	We re-express $\ket S = \paren{C_1 \otimes C_2} \ket{0}^{\otimes n_1}\ket{\Psi_+}^{\otimes k}\ket{0}^{\otimes n_2}$
	using~\cref{lem:stabilizerStateDecomposition} for $n_1,k,n_2 \in \mathbb N$
	and $C_1, C_2$ Clifford unitaries (\cref{fig:localStatePreparation}).
	Let all deferred non-Clifford diagonal gates be $D_i$,
	then we define the pure state
	\begin{equation}
		\ket \psi \coloneqq \bra{+}_{ab}^{\otimes (n_1 + k)} \bigl[(\bigotimes_i D_i) C_1\bigr]_{ab}\ket{0}_a^{\otimes n_1}\ket{\Psi_+}_{bc}^{\otimes k},
	\end{equation}
	which is prepared on a subspace of $\mathcal Y$.
	We use the fact that $C_2$ is a Clifford unitary to derive
	\begin{equation}
		\nullity{\ket{D}} = \nullity*{\mathcal C \ket{+}^{\otimes \ell}} = \nullity*{C_2 \ket{\psi}\ket{0}^{\otimes n_2}} = \nullity{\ket{\psi}} \le k
	\end{equation}
	since the nullity is additive under tensor products, $\nullity{\ket{0}} = 0$,
	and the fact that the nullity is at most the number of qubits in a state.
	Thus $\mathcal C$ must have local simulation cost of at least $\nullity{\ket{D}}$.
\end{proof}

We now prove that $\nullity{\ket{D}}$ ebits are sufficient to implement $D$ using local operations.
The main difficulty is when $\nullity{\ket{D}} < n$, the number of qubits supporting $D$,
because it $D$ can be implemented with $n$ ebits using a remote gate gadget (\cref{fig:remoteGate}).

\begin{lemma}\label{lem:compressDiagonal}
	For any diagonal gate $D_n$ supported on $n$ qubits, there exist Clifford unitaries $V$ and $V'$ such that
	\begin{equation}
		V D_n V' = \idm_{n-n'} \otimes D_{n'},
	\end{equation}
	for $n' \coloneqq \nullity{\ket{D_n}}$ and a diagonal operator $D_{n'}$ supported on $n'$ qubits.
\end{lemma}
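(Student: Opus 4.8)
The plan is to read off $V$ and $V'$ from the stabilizer group of the magic state $\ket{D_n}$. Write $D_n = \sum_{x\in\mathbb F_2^n}e^{i\theta(x)}\ketbra{x}{x}$. The first step is to determine the structure of $\stab{\ket{D_n}}$. Since $D_n$ commutes with every $Z$-type Pauli and satisfies $X^aD_n = (X^aD_nX^a)X^a$ with $X^aD_nX^a$ again diagonal, the condition $P\ket{D_n}=\ket{D_n}$ for $P = \eta\,X^aZ^b$ unwinds to a constraint on the first difference $\theta(x+a)-\theta(x)$ (it must be affine in $x$, with the ``linear part'' read off from $b$). The one fact I really need to extract is that the kernel of the homomorphism $P\mapsto(\text{$X$-part of }P)$ on $\stab{\ket{D_n}}$ is trivial: if $P=\pm Z^b$ then, because $Z^b$ commutes with $D_n$, $P\ket{D_n}=\ket{D_n}$ forces $\pm Z^b\ket{+}^{\otimes n}=\ket{+}^{\otimes n}$ and hence $b=0$ with the ``$+$'' sign. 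Therefore $P\mapsto(\text{$X$-part of }P)$ is injective, its image $A\subseteq\mathbb F_2^n$ is a subspace of dimension $\log_2\abs{\stab{\ket{D_n}}}=n-n'$, and $\stab{\ket{D_n}}$ has a generating set $g_1,\dots,g_{n-n'}$ whose $X$-parts $a_1,\dots,a_{n-n'}$ are linearly independent.

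Second, I would bring these generators to canonical form using a Clifford that preserves diagonality of $D_n$. Choose an invertible $\mathbb F_2$-linear map $L$ with $L(\operatorname{span}(e_1,\dots,e_{n-n'}))=A$ and let $C_L$ be the corresponding \cnot{} circuit, $C_L\ket{x}=\ket{Lx}$. Conjugating by $C_L^\dagger$ turns $D_n$ into another diagonal gate and turns $\stab{\ket{D_n}}$ into a group generated by Paulis of the form $X_jZ^{b_j}$ for $j\le n-n'$. These pairwise commute, so $(b_j)_k=(b_k)_j$ for $j,k\le n-n'$, which is exactly the consistency condition that lets me pick a product $E$ of $CZ$, $S$, and $Z$ gates --- all \emph{diagonal} Cliffords --- conjugating each $X_jZ^{b_j}$ to $X_j$: the $CZ_{jk}$'s remove the off-diagonal $Z$-content (the symmetry $(b_j)_k=(b_k)_j$ makes the required choices mutually compatible) and single-qubit $S_j,Z_j$ factors correct the residual $Z_j$ and the Hermitian-fixing phase. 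Then $E\,\stab{C_L^\dagger\ket{D_n}}\,E^\dagger=\langle X_1,\dots,X_{n-n'}\rangle$, and since $E$ is diagonal, $F\coloneqq E\,C_L^\dagger D_n C_L$ is still a diagonal gate.

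Third, I would extract the conclusion from the transformed magic state. Because $C_L$ fixes $\ket{+}^{\otimes n}$, the state $E\,C_L^\dagger\ket{D_n}$ equals the magic state $F\ket{+}^{\otimes n}$, and it is stabilized by $X_1,\dots,X_{n-n'}$, so it factors as $\ket{+}^{\otimes(n-n')}\otimes\ket{\mu}$. Writing $F=\sum_x e^{i\psi(x)}\ketbra{x}{x}$ and comparing $2^{-n/2}\sum_x e^{i\psi(x)}\ket{x}$ with $\ket{+}^{\otimes(n-n')}\otimes\ket{\mu}$ shows $\psi$ is independent of the first $n-n'$ coordinates, hence $F=\idm_{n-n'}\otimes D_{n'}$ for a diagonal $D_{n'}$ on $n'$ qubits. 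Setting $V\coloneqq E\,C_L^\dagger$ and $V'\coloneqq C_L$ gives $VD_nV'=\idm_{n-n'}\otimes D_{n'}$; moreover $V\ket{D_n}=\ket{+}^{\otimes(n-n')}\otimes\ket{D_{n'}}$, so by additivity of nullity and its invariance under Clifford unitaries we also get $\nullity{\ket{D_{n'}}}=n'$, i.e.\ $D_{n'}$ genuinely uses all $n'$ qubits.

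The step I expect to be the main obstacle is the second one: showing the generators can be normalized to $X_1,\dots,X_{n-n'}$ using only \cnot{}s followed by \emph{diagonal} Cliffords, rather than an arbitrary Clifford (which could contain Hadamards and destroy the diagonality of $D_n$ that the final step relies on). This goes through precisely because $D_n$ being diagonal forces the stabilizer generators to have linearly independent $X$-parts, and because cleaning up the $Z$-content of an $X$-type Pauli only ever needs $CZ$, $S$, and $Z$ conjugations --- with the pairwise commutation of the generators guaranteeing the $CZ$ choices are consistent. A secondary nuisance is carefully tracking the phases $\eta_j$ and signs so that each generator is sent to $+X_j$ rather than $-X_j$ or $\pm Y_j$.
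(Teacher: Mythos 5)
Your proposal is correct, and it reaches the same destination with the same toolkit (CNOTs to normalize $X$-parts, then \cz{}/$S$/$Z$ diagonal Cliffords to strip $Z$-content and phases), but it is organized differently from the paper's argument. The paper proceeds by induction on the nullity: at each step it takes a \emph{single} stabilizer of $\ket{D_k}$, observes its $X$-part is nonzero (because $\pm Z^c$ cannot stabilize $\ket{D_k}$), compresses that $X$-part onto one qubit with \cnot{}s, clears the $Z$-content with \cz{} and $S$, fixes the sign with $Z$, and concludes that the resulting diagonal gate commutes with both $X_1$ and $Z_1$, hence is identity on that qubit, then recurses. You instead do everything in one global step: you show the $X$-part map on the full stabilizer group is an injective homomorphism (the same ``no $Z$-only stabilizers'' observation), so its image is an $(n-n')$-dimensional subspace, implement a single $\mathbb F_2$-linear change of basis $C_L$, and then clean all $n-n'$ generators simultaneously --- which forces you to supply an extra ingredient the paper never needs, namely the pairwise-commutation symmetry $(b_j)_k=(b_k)_j$ guaranteeing the \cz{} choices are mutually consistent; your final step also differs, concluding via the amplitude comparison of $F\ket{+}^{\otimes n}$ with $\ket{+}^{\otimes(n-n')}\otimes\ket{\mu}$ rather than via commutation with $X_1,Z_1$. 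The trade-off: the paper's induction is leaner (one stabilizer at a time, no group-structure or consistency bookkeeping), while your version is non-recursive and yields slightly more explicit structure --- $V'$ is a pure \cnot{} circuit, $V$ is a \cnot{} circuit followed by a diagonal Clifford, and you get $\nullity{\ket{D_{n'}}}=n'$ for free. You correctly identified the delicate point (keeping the cleanup diagonal so $F$ stays diagonal), and your handling of it is sound.
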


\begin{proof}
	We prove this Lemma by induction on $n'$.
	For the base case, we are given an $n$-qubit operator with $\nullity{\ket{D_n}} = n$,
	then $V = V' = \idm$ and the result follows trivially.

	For the induction step, we are given a $k$-qubit operator $D_k$, with $\nullity{\ket{D_k}} = n' < k$.
	Therefore, by definition of the nullity,
	there must be a $+1$-expectation value of some stabilizer generator $X^bZ^c$ such that
	\begin{equation}
		\abs*{\bra{D_k} X^bZ^c \ket{D_k}}
		= \abs*{\bra{+}^n D_k^\dagger X^bZ^c D_k\ket{+}^n}
		= 1,
	\end{equation}
	for bitstrings $b, c \in \set{0,1}^k$ and using the notation $P^b = P_1^{b_1}\cdots P_k^{b_k}$ for single-qubit operator $P$.
	We can see that there is some $b_x = 1$ for $x \in [n]$,
	since
	\begin{equation}
		\abs*{\bra{D_k} Z^c \ket{D_k}} = \abs*{\bra{+}^n Z^c \ket{+}^n} = 0
	\end{equation}
	for any $c$ with $\abs{c} > 0$.

	Given the following facts on the \cnot{} gate
	\begin{equation}\label{eq:cxEqualities}
		\cnot{}_{ij} Z_j \cnot{}_{ij} = Z_i Z_j, \qquad \cnot{}_{ij} X_i \cnot{}_{ij} = X_i X_j,
	\end{equation}
	we can find a diagonal operator that has a $+1$ expectation with one $X$ operator.
	We define
	\begin{equation}
		V_X \coloneqq \prod_{j \in [n] \setminus \set{x} : b_j = 1} \cnot{}_{xj}
	\end{equation}
	and then it is easy to see, using \cref{eq:cxEqualities}, that the diagonal operator $D_X \coloneqq V_X D_k V_X$
	has expectation value
	\begin{equation}
		1 = \abs*{\bra{D_k} X^b Z^c \ket{D_k}} = \abs*{\bra{D_X} V_X X^b V_X Z^{c'} \ket{D_X}} = \abs*{\bra{D_X} X_x Z^{c'} \ket{D_X}},
	\end{equation}
	for some bitstring $c' \in \set{0,1}^k$.

	Now we find a diagonal operator that has $+1$ expectation without $Z$ operators.
	We use the two facts
	\begin{equation}
		\cz{}_{ij} X_i \cz{}_{ij} = X_i Z_j
	\end{equation}
	and
	\begin{equation}
		S X S^\dagger = -Y =  i X Z.
	\end{equation}
	We define
	\begin{equation}
		V_Z \coloneqq S_x^{c'_x} \prod_{j \in [n] \setminus \set{x} : c'_j = 1} \cz{}_{xj}
	\end{equation}
	such that the diagonal operator $D_Z \coloneqq V_Z D_X$ has expectation value
	\begin{multline}
		1 = \abs*{\bra{D_X} X_x Z^{c'} \ket{D_X}} = \abs*{\bra{D_Z} V_Z X_x Z^{c'} V_Z^\dagger \ket{D_Z}}\\
		= \abs*{\bra{D_Z} S_x^{c_x} X_x Z_x^{c_x} (S_x^\dagger)^{c_x} \ket{D_Z}} = \abs*{\bra{D_Z} X_x \ket{D_Z}}.
	\end{multline}
	If $x \neq 1$, define $V_S \coloneqq \sw{}_{x1}$, otherwise define $V_S \coloneqq \idm$.
	Then, the diagonal operator $D_S \coloneqq V_S D_Z V_S$ has expectation value
	\begin{equation}\label{eq:x1Expectation}
		1 = \abs*{\bra{D_Z} X_x \ket{D_Z}} = \abs*{\bra{D_S} X_1 \ket{D_S}}.
	\end{equation}

	We claim that there exists a diagonal operator $D_{k-1}$ acting on $k-1$ qubits
	such that we can factorize $D_S = Z^b \otimes D_{k-1}$, where $b \in \set{0,1}$.
	There are two cases since $\brakett{D_S}{X_1}{D_S} \in \set{-1,1}$.
	In the $-1$ case, we define $V_+ \coloneqq Z_1$,
	then the diagonal operator $D_+ \coloneqq V_+ D_S$ has expectation
	\begin{equation}
		-1 = \brakett{D_S}{X_1}{D_S} = -\brakett{D_+}{X_1}{D_+}.
	\end{equation}
	Otherwise, we set $V_+ = \idm$.

	Now $\ket{D_+}$ has a $+1$ expectation value for $X_1$,
	so $D_+\ket{j} = X_1 D_+\ket{j}$ for any computational basis state $\ket{j}$.
	Therefore, by expanding the action of $D_+$, we obtain $X_1 D_+ \ket{j} = D_+X_1\ket{j}$
	and see that $X_1$ commutes with $D_+$.
	Clearly, $Z_1$ also commutes with the diagonal operator $D_+$
	so $D_+$ commutes with any unitary on the first qubit,
	which is a linear combination of $X_1$, $Z_1$, and $X_1Z_1$.
	The only operator that commutes with any unitary is the identity operation,
	so we can factorize $D_+ = \idm_1 \otimes D_{k-1}$.
	Finally, $\idm_1 \otimes D_{k-1} = V_+ V_S V_Z V_X D_k V_XV_S$
	and, by induction, $V$ and $V'$ can always be constructed.
\end{proof}

\begin{corollary}\label{cor:sufficientDiagonal}
	There exists an $(\mathcal X, \mathcal Y)$-injection circuit
	that implement a diagonal gate $D$ on $\mathcal Y$ using only local operations
	and $\nullity{\ket{D}}$ ebits between $\mathcal X$ and $\mathcal Y$.
\end{corollary}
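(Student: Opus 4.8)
The plan is to read this off \cref{lem:compressDiagonal} by composing the compression Cliffords with the remote gate gadget of \cref{fig:remoteGate}, then paying one ebit per non-local \cnot{}. The genuinely substantive ingredient is \cref{lem:compressDiagonal} itself (constructing the Clifford unitaries that squeeze the support of $D$ down to $\nullity{\ket D}$ qubits); granted that, the corollary is a short assembly with no remaining obstacle. So the exposition below is mostly bookkeeping about which operations are local and how many ebits are consumed.

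First I would apply \cref{lem:compressDiagonal} to $D=D_n$, which is supported on $n$ qubits inside $\mathcal Y$. This yields Clifford unitaries $V,V'$ with $V D_n V' = \idm_{n-n'}\otimes D_{n'}$ for $n'\coloneqq\nullity{\ket D}$ and $D_{n'}$ diagonal on $n'$ qubits, equivalently $D = V^\dagger\,(\idm_{n-n'}\otimes D_{n'})\,V'^\dagger$. Both $V$ and $V'$ act only on qubits of $\mathcal Y$, so they are local operations of the injection circuit. I then build the circuit as: apply $V'^\dagger$ on $\mathcal Y$; apply $\idm_{n-n'}\otimes D_{n'}$ (trivial on $n-n'$ qubits of $\mathcal Y$, and $D_{n'}$ on the remaining $n'$ qubits of $\mathcal Y$); apply $V^\dagger$ on $\mathcal Y$.

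The only non-Clifford piece, $D_{n'}$ acting on $n'$ qubits of $\mathcal Y$, I implement with the remote gate gadget of \cref{fig:remoteGate}: prepare $n'$ fresh $\ket 0$ ancillas inside $\mathcal X$, apply a \cnot{} from each of the $n'$ qubits of $\mathcal Y$ onto its ancilla, apply $D_{n'}$ on the $\mathcal X$ ancillas (permitted, since $(\mathcal X,\mathcal Y)$-injection circuits may carry non-Clifford gates on $\mathcal X$), measure the ancillas in the $X$ basis, and apply the classically controlled $Z$ correction back on $\mathcal Y$. Every operation here is local except the $n'$ \cnot{}s straddling the $\mathcal X$–$\mathcal Y$ cut; the $\ket 0$ preparations, the $X$-basis measurements, and the conditional $Z$ corrections are all single-register stabilizer operations (the $X$ measurement being $H$ followed by a destructive $Z$ measurement).

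Finally, I replace each of the $n'$ crossing \cnot{}s by its LOCC implementation using one pre-shared Bell pair (as recorded in the discussion following \cref{def:localSimulationCost}). The result is an $(\mathcal X,\mathcal Y)$-injection circuit built from local operations plus exactly $n'=\nullity{\ket D}$ ebits that implements $D$ on $\mathcal Y$ for any input. The one point worth a sentence of care — that the ancilla count, and hence the ebit count, is $\nullity{\ket D}$ rather than the full support size $n$ — is precisely the content of \cref{lem:compressDiagonal}, so nothing further is needed. (Optionally one can remark that this matches the lower bound of \cref{lem:necessaryDiagonal}: since the induction in \cref{lem:compressDiagonal} terminates only when the nullity equals the qubit count, $\nullity{\ket{D_{n'}}}=n'$, so the gadget applied to $D_{n'}$ is already resource-optimal.)
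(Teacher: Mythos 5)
Your proposal is correct and follows essentially the same route as the paper: invoke \cref{lem:compressDiagonal} to reduce $D$ to $\idm_{n-n'}\otimes D_{n'}$ with $V^\dagger,(V')^\dagger$ applied locally on $\mathcal Y$, then implement $D_{n'}$ via the remote gate gadget of \cref{fig:remoteGate} with its $n'$ crossing \cnot{}s realized by $n'=\nullity{\ket D}$ shared ebits. Your added bookkeeping (LOCC \cnot{} substitution, locality of the corrections) and the closing remark that $\nullity{\ket{D_{n'}}}=n'$ match the paper's brief argument.
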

\begin{proof}
	By \cref{lem:compressDiagonal},
	there exists Clifford unitaries $V, V'$ such that $VDV' = \idm_{n-n'} \otimes D_{n'}$,
	for $n' \coloneqq \nullity{\ket{D}}$.
	Clifford unitaries preserve nullity,
	so
	\begin{equation}
		\nullity{\ket{D}} = \nullity{\ket{VDV'}} = \nullity{\ket{D_{n'}}}
	\end{equation}
	since the nullity is additive under tensor factors and $\nullity{\ket{\idm}} = 0$.
	We thus can apply $D_{n'}$ by a remote gate gadget (\cref{fig:remoteGate}) using $n'$ ebits,
	apply $V^\dagger$ and $(V')^\dagger$ locally, and obtain $D$ as required.
\end{proof}

We now prove the main theorem of this section.
\begin{theorem}\label{thm:diagonalSimulationCost}
	For any $(\mathcal X, \mathcal Y)$-injection circuit,
	it is necessary and sufficient to have local simulation cost $\nullity{\ket{D}}$
	to implement a diagonal gate $D$ on $\mathcal Y$.
\end{theorem}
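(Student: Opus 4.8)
The plan is to assemble the theorem from the two directions that have already been established, so the proof is essentially a packaging step. For the necessity direction, \cref{lem:necessaryDiagonal} directly gives that any $(\mathcal X,\mathcal Y)$-injection circuit implementing a diagonal gate $D$ on $\mathcal Y$, post-selected on any measurement outcome, must have local simulation cost at least $\nullity{\ket{D}}$; nothing further is needed for that half.

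For the sufficiency direction, I would invoke \cref{cor:sufficientDiagonal}, which produces an explicit $(\mathcal X,\mathcal Y)$-injection circuit implementing $D$ on $\mathcal Y$ using only local operations together with $\nullity{\ket{D}}$ ebits shared across the bipartition. The one point requiring care is to verify that this circuit has local simulation cost exactly $\nullity{\ket{D}}$, not merely at most that: the construction uses precisely $\nullity{\ket{D}}$ non-local \cnot{}s (those of the remote gate gadget in \cref{fig:remoteGate}, each replaceable by LOCC consuming one Bell pair), and a non-local \cnot{} has local simulation cost $1$ since its Choi state has exactly one distillable ebit, while every other operation in the construction acts on $\mathcal X$ or on $\mathcal Y$ alone and hence contributes $0$ by \cref{def:localSimulationCost}. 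Summing the local simulation costs of the stabilizer operations of the circuit therefore yields exactly $\nullity{\ket{D}}$, and \cref{lem:localSimulation} confirms that this amount of entanglement indeed suffices to reproduce the circuit locally.

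Combining the two bounds, $\nullity{\ket{D}}$ is simultaneously a lower bound on the local simulation cost of any injection circuit implementing $D$ on $\mathcal Y$ (\cref{lem:necessaryDiagonal}) and an achievable value (\cref{cor:sufficientDiagonal} together with the cost accounting above), which is exactly the claim. I do not expect any substantive obstacle here; the only thing to get right is the bookkeeping matching the ebit count of \cref{cor:sufficientDiagonal} to the distillable-entanglement definition of local simulation cost, and that follows immediately because the non-local operations used are exactly $\nullity{\ket{D}}$ unit-cost \cnot{}s.
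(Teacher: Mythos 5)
Your proposal is correct and follows essentially the same route as the paper: necessity from \cref{lem:necessaryDiagonal}, and sufficiency from \cref{cor:sufficientDiagonal} combined with the observation that the $\nullity{\ket{D}}$ shared ebits are produced by that many non-local \cnot{}s of unit local simulation cost while all remaining operations are local and cost zero. The extra bookkeeping you describe (unit cost of a non-local \cnot{}, appeal to \cref{lem:localSimulation}) is consistent with, and no more than a slightly more explicit version of, the paper's own argument.
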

\begin{proof}
	\cref{lem:necessaryDiagonal} proves the necessary part.
	It is sufficient since $\nullity{\ket{D}}$ \cnot{}s can create as many Bell pairs,
	following which \cref{cor:sufficientDiagonal} shows how to implement $D$ using local operations.
	This has a non-local simulation cost of $\nullity{\ket{D}}$, as required.
\end{proof}

\section{Circuit compilation}
\begin{table}
	\centering
	\begin{tabular}{lr} \toprule
		State                                                                         & Nullity                \\
		\midrule
		$\ket{e^{i\theta Z}} : \theta \in \frac{\pi}{4}\mathbb Z$                     & 0                      \\
		$\ket{e^{i\theta Z}} : \theta \in \mathbb R \setminus \frac{\pi}{4}\mathbb Z$ & 1                      \\
		$\ket{\textsc{cs}}$                                                    & 2                      \\
		$\ket{\textsc{c}^{n-1}\textsc{z}} : n \ge 3$                    & n~\cite{Beverland2020} \\
		\bottomrule
	\end{tabular}
	\caption{%
		The stabilizer nullity of common resource states of diagonal gates.
		We define $S = \sqrt{Z}$, then $\textsc{cs}$ is the controlled version.
	}
	\label{tab:nullities}
\end{table}

We apply the results of the previous section to compilation problems in fault-tolerant architectures.
To be able to give circuit lower bounds, only nullities of diagonal gates are required,
so we give the nullities of a few common diagonal gates in~\cref{tab:nullities}.
Furthermore, in this section we assume that we are given a simple graph $G$
whose vertices, $V(G)$, represent qubit systems
and whose edges specify the available stabilizer operation interactions,
we define $X \subseteq V(G)$ to correspond with a Hilbert space $\mathcal X$
and $Y \subseteq V(G)$ to correspond with a Hilbert space $\mathcal Y$.
The minimum size edge-cut that separates $X$ from $Y$ is denoted $\mincut{X}{Y}$.
Now we show that the nullity of non-Clifford diagonal gates implies a circuit lower bound
based on the available stabilizer interactions on the architecture.

\begin{theorem}\label{thm:rotationsLB}
	An $(\mathcal X, \mathcal Y)$-injection circuit needs a depth of at least
	\begin{equation}
		\sum_{i=1}^k  \frac{\nullity{\ket{g_i}}}{\mincut{X}{Y}}
	\end{equation}
	to implement a set of parallel diagonal gates, $\mathcal G = \set{g_1, \dots, g_k}$, supported on $\mathcal Y$.
\end{theorem}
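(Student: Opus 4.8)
The plan is to combine Theorem~\ref{thm:diagonalSimulationCost} with a counting argument over time slices of the circuit, bounding how much local simulation cost a single layer of depth can supply given the edge-cut $\mincut{X}{Y}$. First I would observe that, by Theorem~\ref{thm:diagonalSimulationCost}, implementing each $g_i$ on $\mathcal Y$ requires local simulation cost at least $\nullity{\ket{g_i}}$; since the gates act on disjoint registers within $\mathcal Y$ and local simulation cost is additive over the stabilizer operations of an injection circuit (Definition~\ref{def:localSimulationCost}), the whole circuit must accumulate local simulation cost at least $\sum_{i=1}^k \nullity{\ket{g_i}}$ across the $(\mathcal X,\mathcal Y)$ bipartition. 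I should be slightly careful here that the lower bounds for the individual $g_i$ compose additively rather than just take a maximum; this follows because one can restrict attention to each $g_i$'s target qubits by post-selecting the others into a product state, and nullity is additive under tensor factors, so the obstructions live on disjoint tensor factors and add.

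Next I would bound the local simulation cost achievable in one layer of the circuit. A depth-$d$ circuit is a composition of $d$ layers, each layer being a set of stabilizer operations acting on disjoint qubits (an edge or a vertex of $G$ each). Only non-local operations contribute to the local simulation cost across $(\mathcal X,\mathcal Y)$, and as noted in the excerpt the non-local stabilizer operations with nonzero cost (\cnot{}s and two-qubit joint Pauli measurements) each have local simulation cost exactly $1$ and act on exactly one edge of $G$. Any such operation that contributes must act on an edge with one endpoint in $X$ and the other in $Y$, i.e.\ an edge crossing the cut. Since a layer uses vertex-disjoint (hence edge-disjoint) operations, at most $\mincut{X}{Y}$ crossing edges can be used simultaneously — wait, more precisely the number of edge-disjoint crossing edges usable in one layer is at most the number of edges in \emph{some} $X$–$Y$ cut, and the minimum such is $\mincut{X}{Y}$; since a single layer's operations form a matching-like edge-disjoint set, they can be routed through at most $\mincut{X}{Y}$ crossing edges. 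Hence each layer supplies at most $\mincut{X}{Y}$ units of local simulation cost.

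Combining the two bounds, a depth-$d$ circuit supplies at most $d\cdot\mincut{X}{Y}$ local simulation cost, and we need at least $\sum_i \nullity{\ket{g_i}}$, giving $d \ge \sum_{i=1}^k \nullity{\ket{g_i}}/\mincut{X}{Y}$ as claimed. The main obstacle I anticipate is making the per-layer bound airtight: one must argue that the edge-disjoint crossing operations in a single time slice are genuinely bounded by $\mincut{X}{Y}$ rather than by the total number of crossing edges, and that each crossing operation contributes at most $1$ to the local simulation cost even when it is a joint measurement or a classically-controlled gate — this needs the explicit local-simulation-cost computations referenced in \cref{app:localSimulationCost}. A secondary subtlety is that non-Clifford diagonal gates in $\mathcal X$ (allowed by the injection-circuit definition) do not help: they are local to $\mathcal X$ and thus contribute zero local simulation cost, so they can be ignored in the counting, but this should be stated explicitly.
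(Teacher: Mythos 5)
Your overall strategy (total local simulation cost at least $\sum_i \nullity{\ket{g_i}}$, divided by a per-layer budget) is the same as the paper's, and the first half is fine: the paper simply applies \cref{lem:necessaryDiagonal} to the parallel gate $g_1\otimes\cdots\otimes g_k$ and uses additivity of nullity under tensor factors, which is a more direct route than your post-selection argument but equivalent in substance. The gap is in the second half, the per-layer bound. The local simulation cost is defined with respect to a fixed bipartition of \emph{all} qubits, and when $X\cup Y\neq V(G)$ --- which is exactly the regime the theorem is about, since otherwise $\mincut{X}{Y}$ is just the number of direct $X$--$Y$ edges and the EDPC routing application is vacuous --- your accounting breaks. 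If you lump the intermediate routing qubits with $\mathcal X$, the operations that carry cost across the bipartition are those on edges of $\partial Y$, not on direct $X$--$Y$ edges, and a single layer can contain vertex-disjoint operations on \emph{many} edges of $\partial Y$ even when $\mincut{X}{Y}$ is small (e.g.\ $X=\set{x}$ joined by a single edge to a tree that fans out to all of $Y$: the min-cut is $1$, yet one layer can act on every edge incident to $Y$ simultaneously). So the claim ``each layer supplies at most $\mincut{X}{Y}$ units of local simulation cost'' is false for the bipartition you fixed, and your justification --- that the crossing edges used in one layer are at most the size of \emph{some} cut, ``and the minimum such is $\mincut{X}{Y}$'' --- is a non sequitur: being bounded by some cut does not bound you by the minimum cut. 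As written, your argument only yields depth at least $m/\abs{\partial Y}$ (or $m/\abs{E(X,Y)}$ under the implicit assumption $V(G)=X\cup Y$), which is weaker than the theorem.

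The missing idea is to let the bipartition vary: for \emph{every} partition $(Z,\bar Z)$ of $V(G)$ with $Y\subseteq Z$ and $X\subseteq\bar Z$, the circuit is still an injection circuit with non-Clifford gates confined to $\bar{\mathcal Z}\supseteq\mathcal X$ and targets in $\mathcal Z$, so \cref{lem:necessaryDiagonal} forces cost at least $m$ across \emph{that} cut, while each layer can supply at most $\abs{E(Z,\bar Z)}$ across it (each crossing stabilizer operation costs at most $1$, per \cref{app:localSimulationCost}, which you did use correctly). Maximizing the resulting bound $m/\abs{E(Z,\bar Z)}$ over admissible $Z$ is what puts $\mincut{X}{Y}$ in the denominator; choosing the bipartition to realize the bottleneck cut is the step your proposal never takes.
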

\begin{proof}
	By \cref{lem:necessaryDiagonal}, the local simulation cost to implement $\mathcal G$
	with an $(\mathcal X, \mathcal Y)$-injection circuit is at least $m \coloneqq \sum_{i=1}^k \nullity{\ket{g_i}}$.
	Consider any partitioning of $G$, $(Z, \bar Z)$,
	such that $Y \subseteq Z$ and $X \subseteq \bar Z$.
	Since a stabilizer operation that interacts $Z$ with $\bar Z$ can only increase the local simulation cost by 1 per time step,
	the circuit depth is lower bounded by $\frac{m}{\abs{E(Z, \bar Z)}}$
	where $E(Z, \bar Z)$ is the set of edges between $Z$ and $\bar Z$.
	Let us maximize over $Z$ to see that the circuit depth is lower bounded by
	\begin{equation}
		\max_{Z: Y \subseteq Z, X\subseteq \bar Z} \frac{m}{E(Z, \bar Z)} = \frac{m}{\mincut{X}{Y}}
	\end{equation}
	as required.
\end{proof}

We show that the edge-disjoint path compilation (EDPC) algorithm~\cite{Beverland2022}
compiles single-qubit diagonal gates to a depth-optimal injection circuit.

\begin{theorem}\label{thm:edpcRotationsOptimal}
	EDPC produces an $(\mathcal X, \mathcal Y)$-injection circuit
	of (logical) depth at most
	\begin{equation}
		4\sum_{i=1}^k\frac{\nullity{\ket{g_i}}}{\mincut{X}{Y}}
	\end{equation}
	to implement a set of parallel single-qubit diagonal gates, $\mathcal G = \set{g_1, \dots, g_k}$, supported on $\mathcal Y$.
	This is the asymptotically optimal circuit depth to implement $\mathcal G$.
\end{theorem}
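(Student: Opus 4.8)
The plan is to combine the lower bound of \cref{thm:rotationsLB} with the routing guarantee of EDPC~\cite{Beverland2022}, after first collapsing the nullity sum to a simple gate count. For a single-qubit diagonal gate $g$, the state $\ket{g} = g\ket{+}$ lives on one qubit, so $\nullity{\ket{g}} \le 1$, and $\nullity{\ket{g}} = 0$ precisely when $\ket{g}$ is a stabilizer state, i.e.\ when $g$ is, up to global phase, a Clifford $Z$-rotation $e^{i\theta Z}$ with $\theta \in \frac{\pi}{4}\mathbb{Z}$ (cf.\ \cref{tab:nullities}). Hence $m \coloneqq \sum_{i=1}^k \nullity{\ket{g_i}}$ is exactly the number of \emph{non-Clifford} gates among $g_1,\dots,g_k$, and the Clifford ones are applied by local Clifford unitaries, costing no ebits and adding no depth. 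By \cref{thm:rotationsLB}, every $(\mathcal X,\mathcal Y)$-injection circuit for $\mathcal G$ has depth at least $m/\mincut{X}{Y}$, so it suffices to prove that EDPC produces a circuit of depth at most $4m/\mincut{X}{Y}$; this single inequality yields both the stated upper bound and asymptotic optimality (optimal up to the constant $4$).

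For the upper bound I would analyse EDPC on a layer of parallel single-qubit injections in rounds. Each round, EDPC computes a \emph{maximum} family of edge-disjoint $X$--$Y$ paths in $G$; by Menger's theorem this family has size $\mincut{X}{Y}$ (or, in the final round, at least the number of non-Clifford gates still to be injected). Each path is dedicated to injecting one magic state into one remaining target qubit via the remote gate gadget (\cref{fig:remoteGate}): its non-local \cnot{}s are turned into local operations by distributing one Bell pair along the path, exactly the local-simulation reduction of \cref{lem:localSimulation} and \cref{fig:choiInjection}. A Bell pair along a path of arbitrary length is established in $O(1)$ logical depth (edge-local Bell pairs fused by simultaneous two-qubit joint Pauli measurements), and then the gadget's local \cnot{}s, the $X$-basis measurements, and the classically-controlled $Z$-corrections each take a constant number of layers; the EDPC bookkeeping bounds one round by $4$ logical layers. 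Since each round removes $\mincut{X}{Y}$ of the outstanding non-Clifford gates, there are at most $\lceil m/\mincut{X}{Y}\rceil$ rounds, giving total depth at most $4\lceil m/\mincut{X}{Y}\rceil$, which is the claimed $4m/\mincut{X}{Y}$ up to the rounding. Comparing with the $m/\mincut{X}{Y}$ lower bound then gives optimality up to the factor $4$.

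I expect the main obstacle to be pinning down the per-round depth accounting that yields exactly the constant $4$: one must check that distributing many Bell pairs along many edge-disjoint paths simultaneously genuinely costs $O(1)$ logical depth in the stabilizer-operation model (so that path length never enters the bound), that the Clifford nullity-zero rotations can be interleaved at no cost, and that the last, possibly partial, round of EDPC incurs no extra overhead. These are essentially bookkeeping layered on top of the EDPC routing procedure and Menger's theorem, but they are where the explicit constant in the theorem is fixed, so I would treat them carefully rather than as routine.
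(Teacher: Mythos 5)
Your proposal is correct and follows essentially the same route as the paper: reduce $\sum_i \nullity{\ket{g_i}}$ to the count of non-Clifford single-qubit gates, use EDPC's maximum-flow/edge-disjoint-path routing (your Menger's-theorem phrasing is equivalent to the paper's min-cut--max-flow invocation) to inject $\mincut{X}{Y}$ gates per depth-$4$ round via remote gate gadgets, and match against the lower bound of \cref{thm:rotationsLB}. The "obstacles" you flag (constant-depth long-range \cnot{}s along edge-disjoint paths and the per-round constant $4$) are exactly what the paper delegates to the cited EDPC construction~\cite{Beverland2022}, and your $\lceil\cdot\rceil$ rounding remark is a minor sharpening the paper glosses over.
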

\begin{proof}
	Let $m \coloneqq \sum_{i=1}^k \nullity{\ket{g_i}}$,
	then $m$ counts the number of non-Clifford gates in $\mathcal G$.
	Any Clifford operation $g \in \mathcal G$ is immediately implemented directly on the qubit by EDPC.
	For the remainder, we may assume $\mathcal G$ consists only of non-Clifford operations.

	A straightforward modification of the maximum flow problem solved in EDPC
	generalizes the boundary vertices to an arbitrary region $X$.
	Now, EDPC finds a maximum flow solution between $X$ and $Y$ that implies a set of edge-disjoint paths along each of which
	EDPC is able to implement long-range \cnot{} operations at a (logical) depth of at most $4$.
	Each such \cnot{}, together with a local remote gate gadget (\cref{fig:remoteGate}) implements a gate $g \in \mathcal G$.
	By the min-cut max-flow theorem~\cite{Ford1956},
	we know that the maximum flow found by EDPC that connects $Y$ to $X$ equals the $\mincut{X}{Y}$.
	Therefore, EDPC produces a compiled circuit of depth at most $4\frac{m}{\mincut{X}{Y}}$ that implements all gates in $\mathcal G$.
	\cref{thm:rotationsLB} provides an asymptotically matching lower bound.
\end{proof}

Note that it is easy to generalize \cref{thm:edpcRotationsOptimal} to gate sets $\mathcal G$
without the parallel requirement since we can simply multiply together sequences of single-qubit diagonal gates.

\section{Conclusion}
We considered a bipartite system,
where magic state factories produce magic states on one side and the computational space consumes magic states to perform (non-Clifford) diagonal operations, $D$.
We showed that any implementation of $D$ needs a circuit with a local simulation cost of at least $\nullity{\ket{D}}$.
Then we gave an algorithm to construct Clifford unitaries $V$ and $V'$ given $D$ so that $VDV' = \idm_{n-n'} \otimes D'$,
for $n' = \nullity{\ket{D}}$ and a diagonal unitary $D'$ with support on $n'$ qubits.
Using $D'$, we can perform a remote gate gadget using only $\nullity{\ket{D}}$ ebits to apply $D$ with local stabilizer circuits.
Since $\nullity{\ket{D}}$ ebits can be constructed by a circuit with the same local simulation cost,
we have characterized the local simulation cost of $D$.

When laying out fault-tolerant quantum computations, our results show that the computational space needs to be well-connected to the magic state factories,
depending on the desired consumption rate of magic states.
In Pauli-based computation~\cite{Litinski2019} there is little need for more than a constant rate of $\ket{\textsc{t}}$ gate consumption.
Modified versions of Pauli-based computation~\cite[Sec.\ 5.1]{Litinski2019} can be used to implement more \textsc{t} gates in parallel,
but come with significant increases in total space-time cost when compared to standard Pauli-based computation~\cite[Sec.\ V.A]{Chamberland2022}.
However, when some parallelism of non-Clifford diagonal gates is maintained during layout,
a higher rate of magic state consumption is required for fast execution.
To avoid a bottleneck in transporting magic states,
we show that it is necessary to spread out the magic state factories
or interleave magic state factories with computation as in~\cite{Gidney2019}.
Finally, we show that the edge-disjoint path compilation algorithm (EDPC)~\cite{Beverland2022}
produces an asymptotically minimum-depth circuit for compiling single-qubit diagonal gates.

\section*{Acknowledgements}
V.K.\ and E.S.\ contributed to this work equally.
E.S.\ was supported by the U.S.\ Department of Energy, Office of Science,
National Quantum Information Science Research Centers, Quantum Science Center.

\printbibliography%

\appendix

\section{Local simulation cost of interacting stabilizer operations}\label{app:localSimulationCost}
We explicitly compute the local simulation cost of the two stabilizer operations that
can interact non-locally: \cnot{} and a two-qubit Pauli measurement.
The local simulation cost of stabilizer operations is defined to be the distillable entanglement of its Choi state (\cref{def:localSimulationCost}).

We first compute the distillable entanglement of $\ket{\Psi_\cnot{}}$.
Recall that $\cnot{}$ acts on Pauli unitaries by conjugation as follows
\begin{align}
	\cnot{}_{1,2}(X_{1})      & =X_{1}X_{2} \\
	\cnot{}_{1,2}(X_{2})      & =X_{2}      \\
	\cnot{}_{1,2}(Z_{1})      & =Z_{1}      \\
	\cnot{}_{1,2}(Z_{2})      & =Z_{1}Z_{2},
\end{align}
and that the Bell state, $\ket{\Psi_+} = \frac{1}{\sqrt 2} \paren{\ket{00} + \ket{11}}$,
has stabilizers
\begin{equation}
	\stab{\ket{\Psi_+}} = \set{\idm, XX, ZZ, -YY} = \left< XX,ZZ \right>,
\end{equation}
for $\left<G\right>$ the group generated by the elements in $G$.
Now consider two Bell states on qubits $r_{1},r_{2},t_{1},t_{2}$
where $r$ stands for reference and $t$ stands for target.
The states have stabilizers
\begin{align}
	 & \stab{\ket{\Psi_+}_{r_{1},t_{1}} \ket{\Psi_+}_{r_{2},t_{2}}}                         \\
	 & = \left< X_{r_1}X_{t_1}, Z_{r_1}Z_{t_1}, X_{r_2}X_{t_2}, Z_{r_2}Z_{t_2}\right> \\
	 & = \left[\begin{array}{cc|cc}
			           r_{1}    & t_{1} & r_{2} & t_{2} \\
			           \hline X & X                     \\
			           Z        & Z                     \\
			                    &       & X     & X     \\
			                    &       & Z     & Z
		           \end{array}\right],
\end{align}
where we wrote the generators of the stabilizer in a table format.
Applying $\cnot{}_{t_{1},t_{2}}$ to qubits $t_{1},t_{2}$ changes the stabilizers to
\begin{equation}
	\left[\begin{array}{cc|cc}
			r_{1}    & t_{1} & r_{2} & t_{2} \\
			\hline X & X     &       & X     \\
			Z        & Z                     \\
			         &       & X     & X     \\
			         & Z     & Z     & Z
		\end{array}\right]
\end{equation}
To simplify, we the local gates $\cnot{}_{t_{1},r_{1}}$ and $\cnot{}_{r_{2},t_{2}}$ to get the stabilizers
\begin{equation}
	\left[\begin{array}{cc|cc}
			r_{1}  & t_{1} & r_{2} & t_{2} \\
			\hline & X     &       & X     \\
			Z      &                       \\
			       &       & X             \\
			       & Z     &       & Z
		\end{array}\right],
\end{equation}
which is $\ket{\Psi_+}_{t_{1},t_{2}}\ket{0}_{r_{1}}\ket{+}_{r_{2}}$,
so the local simulation cost of $\cnot{}$ is one.

Now let us consider measuring $ZZ$ on $t_{1}t_{2}$.
All other bipartite Pauli measurements have the same local simulation cost
since they can be implemented by local operations before applying a non-local $ZZ$ measurement.
We first rewrite the stabilizer generators of $\ket{\Psi_+}_{r_{1},t_{1}}\ket{\Psi_+}_{r_{2},t_{2}}$ to
\begin{equation}
	\left[\begin{array}{cc|cc}
			r_{1}    & t_{1} & r_{2} & t_{2} \\
			\hline X & X     & X     & X     \\
			Z        & Z                     \\
			         &       & X     & X     \\
			         &       & Z     & Z
		\end{array}\right]
\end{equation}
so only one Pauli anti-commutes with $Z_{t_{1}}Z_{t_{2}}$.
Measuring $Z_{t_{1}}Z_{t_{2}}$ with $+1$ outcome turns the state
into
\begin{equation}
	\left[\begin{array}{cc|cc}
			r_{1}  & t_{1} & r_{2} & t_{2} \\
			\hline & X     & X             \\
			Z      &                       \\
			       & Z     & Z             \\
			       &       &       & Z
		\end{array}\right].
\end{equation}
Now applying $\cnot{}_{t,r_{1}}$ and $\cnot{}_{r_{2},t_{2}}$ turns above stabilizer
into $\ket{\Psi_+}_{t_{1},r_{2}} \ket{0}_{r_{1}} \ket{0}_{t_{2}}$,
so the local simulation cost of a $ZZ$ measurement is also 1.

\end{document}